\theoremstyle{definition}
\theoremstyle{plain}
\newtheorem{lemma}{Lemma}
\theoremstyle{plain}
\newtheorem{theorem}{Theorem}
\theoremstyle{plain}
\newtheorem{proposition}{Proposition}
\theoremstyle{definition}
\theoremstyle{definition}
\newtheorem{claim}{Claim}
\theoremstyle{plain}
\newtheorem{corollary}{Corollary}
\title{A Topological Proof of the Gibbard-Satterthwaite Theorem}
\author{Yuliy Baryshnikov \and Joseph Root}
\date{\date{\today}}
\begin{document}

\maketitle

\abstract{We give a new proof of the Gibbard-Satterthwaite Theorem. We construct two topological spaces: one for the space of preference profiles and another for the space of outcomes. We show that social choice functions induce continuous mappings between the two spaces. By studying the properties of this mapping, we prove the theorem.}

\section{Introduction}

The Gibbard-Satterthwaite theorem is a landmark result in social choice theory and mechanism design. It delivers a striking message: the only voting rules which are not vulnerable to strategic manipulation are dictatorships. This fact not only has important implications for economics and political science, it has served as the starting point for the theory of mechanism design. The various branches of mechanism design correspond to the different ways to avoid the impossibilities implied by the Gibbard-Satterthwaite theorem. Proofs of the Gibbard-Satterthwaite theorem have primarily been combinatorial.\footnote{See for instance \citeasnoun{gibbard1973manipulation}, \citeasnoun{satterthwaite1975strategy}, \citeasnoun{barbera1983strategy}, \citeasnoun{reny2001arrow}, and \citeasnoun{sen2001another}.} An exception comes from \citeasnoun{mossel2012quantitative} who recently gave a ``quantitative" proof using analytic techniques. 

In this paper, we provide a new proof of the Gibbard-Satterthwaite theorem using tools from algebraic topology. The key idea is to view the set of preference profiles and the set of outcomes as topological spaces. The social choice function then induces a continuous map between these spaces which can be analyzed using topological techniques. By viewing the problem from this lens, we provide a richer geometric view of the impossibility result of Gibbard and Satterthwaite. 

This note contributes to the literature on topological social choice theory. This area of research began with the publication of \citeasnoun{chichilnisky1980social} which used topological techniques to establish an impossibility result for the aggregation of cardinal preferences. \citeasnoun{baryshnikov1993unifying} gave a unified proof of both Arrow's impossibility theorem and the impossibility of Chichilnisky. While a large literature has since developed studying ``topological social choice," to our knowledge, no proof of the Gibbard-Satterthwaite theorem has yet appeared. 

We start by proving the Muller-Satterthwaite theorem \cite{muller1977equivalence} which states that the only monotonic and unanimous social choice functions are dictatorships. We construct two topological spaces $N_\mathscr{P}$ and $N_A$, for the set of preference profiles and the set of outcomes respectively. A monotonic and unanimous social choice function $f$ induces a continuous map between $N_\mathscr{P}$ and $N_A$. $N_A$ is easily seen to be homotopy equivalent to a $(n-2)$-sphere. \citeasnoun{baryshnikov1993unifying} showed that, 
in dimension $n-2$, the space $N_\mathscr{P}$ has the same homology groups as the Cartesian product of $N$ spheres. We show that the homomorphism between the homologies of $N_\mathscr{P}$ and $N_A$ induced by $f$ must be a projection onto one of the coordinates, proving the theorem.

\section{Preliminaries}
Let $A=\{a_1,\dots, a_n\}$ be a finite set of alternatives. Let $P$ denote the set of linear orders on $A$.
The symbol $\succ$ will be used to denote a generic element of $P$. For a given $\succ$, let $\text{top}(\succ)$ be the $a\in A$ such that $a\succ b$ for all $b\neq a$.
Let $N\geq 1$, denote the number of agents. 
Elements of $P^{N}$ are called preference profiles. The set of preference profiles will be denoted $\mathscr{P}$. A function $f:\mathscr{P}\rightarrow A$ is called a \textbf{social choice function}.
$f$ is said to be \textbf{monotonic} if $f(\succ_1,\dots, \succ_N)=a$ and for each $i$, $\succ_i'$ is a linear order such that $a\succ_i b$ implies that $a\succ'_i b$ for all $b$ then 
$f(\succ_1',\dots, \succ_N')=a$. $f$ is said to be \textbf{unanimous} if whenever all agents top-rank some alternative $a$, $f(\succ_1,\dots, \succ_N)=a$. $f$ is said to be \textbf{dictatorial} if there is an agent $i$ such that $f(\succ_1,\dots, \succ_N)=\text{top}(\succ_i)$ for any $(\succ_1,\dots, \succ_N)$. Finally, $f$ is said to be \textbf{strategy-proof} if for every agent $i$, either $f(\succ_1,\dots, \succ_i,\dots,\succ_N)\succ_i f(\succ_1,\dots, \succ_i',\dots,\succ_N)$ or $f(\succ_1,\dots, \succ_i,\dots,\succ_N) = f(\succ_1,\dots, \succ_i',\dots,\succ_N)$ for every profile $(\succ_1,\dots, \succ_i,\dots,\succ_N)$ and every $\succ_i'$.

The aim is to prove the following theorem:

\begin{theorem}[Gibbard-Satterthwaite]
    If  $n\geq 3$, a social choice function $f$ is surjective and strategy-proof if and only if it is dictatorial. 
\end{theorem}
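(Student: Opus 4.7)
The plan is to reduce the Gibbard-Satterthwaite theorem to the Muller-Satterthwaite theorem, whose topological proof is the focus of this paper. The converse direction---that a dictatorship is surjective and strategy-proof---is immediate: the dictator's top choice is unaffected by other agents' reports, any report by the dictator is weakly dominated by the truth since the outcome is exactly $\text{top}(\succ_i)$, and surjectivity follows by letting the dictator vary the top alternative.

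For the forward direction, suppose $f$ is surjective and strategy-proof. I would first establish monotonicity. Suppose $f(\succ_1,\dots,\succ_N) = a$ and fix a single agent $i$ with $\succ_i'$ satisfying $a \succ_i b \Rightarrow a \succ_i' b$ for all $b$. Let $c = f(\succ_1,\dots,\succ_i',\dots,\succ_N)$. Strategy-proofness, read from the original profile, gives $a \succ_i c$ or $a = c$; read from the altered profile, it gives $c \succ_i' a$ or $c = a$. If $a \neq c$, then $a \succ_i c$ together with the hypothesis yields $a \succ_i' c$, contradicting $c \succ_i' a$ since $\succ_i'$ is a linear order. Hence $c = a$, and iterating over the $N$ agents one at a time upgrades this to the simultaneous form of monotonicity in the paper's definition.

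Next I would show unanimity. If every agent top-ranks $a$ in a profile $\succ$, surjectivity provides some $\tau$ with $f(\tau) = a$. Form $\tau^\ast$ by promoting $a$ to the top of each $\tau_i$ while preserving the remaining order. Since $a$ is maximal in $\tau_i^\ast$, the condition $a \tau_i b \Rightarrow a \tau_i^\ast b$ holds automatically, so monotonicity gives $f(\tau^\ast) = a$. The original profile $\succ$, in which $a$ is also top-ranked by every agent, is in turn a monotonic transformation of $\tau^\ast$ at $a$, so a second application of monotonicity gives $f(\succ) = a$.

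With monotonicity and unanimity in hand, I would invoke the Muller-Satterthwaite theorem---proved topologically in the body of the paper via the nerves $N_\mathscr{P}$ and $N_A$ and the homological identification of the induced map with a coordinate projection---to conclude that $f$ is dictatorial. The only genuine obstacle in this entire argument is therefore absorbed into the Muller-Satterthwaite step; the rest is a routine consequence of strategy-proofness, with the hypothesis $n \ge 3$ entering precisely at the Muller-Satterthwaite invocation, since for $n = 2$ there exist non-dictatorial monotonic unanimous rules.
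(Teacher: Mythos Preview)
Your proposal is correct and follows essentially the same route as the paper: reduce Gibbard--Satterthwaite to Muller--Satterthwaite by showing that surjectivity plus strategy-proofness imply monotonicity and unanimity, then invoke the topological proof of Muller--Satterthwaite. Your argument is in fact more careful than the paper's (which asserts the one-agent monotonicity step without the two-sided application of strategy-proofness you spell out), and your intermediate profile $\tau^\ast$ in the unanimity step is harmless but unnecessary---since $a$ is top-ranked in $\succ$, monotonicity applies directly from $\tau$ to $\succ$.
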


\section{Topological Background}

We assume familiarity with basic notions from algebraic topology.\footnote{See \citeasnoun{hatcher2002algebraic} for a good introduction.} For the reader's convenience, we briefly review a few concepts that will be central to our proof. 

An \textbf{abstract simplicial complex} is a set $V$ together with a collection of subsets $\Delta$ of $V$ such that if  $\sigma\in \Delta$ and $\sigma'\subset \sigma$ then $\sigma'\in \Delta$. Any $v\in V$ such that $\{v\}\in \Delta$ is called a \textbf{vertex} of $\Delta$. We write $V(\Delta)$ for the set of vertices of $\Delta$. If $\sigma\in \Delta$ contains $m+1$ elements, it is referred to as a \textbf{m simplex} of $\Delta$.

We will restrict attention to finite complexes where $V$ is a finite set. 

The topology of abstract simplicial complexes is derived from their so-called \textit{geometric realizations}. Given a finite abstract simplicial complex $S$, consider $\mathbb{R}^{V(S)}$, the vector space whose coordinates are indexed by the vertices of $S$. 
For any $\sigma\in S$, we can define the standard $\sigma$-simplex in $\mathbb{R}^{V(S)}$ as the convex hull of the unit vectors indexed by an element from $\sigma$. 
The \textbf{standard geometric realization} $\vert S \vert$ of the abstract simplicial complex $S$ is the union of the standard $\sigma$-simplices in $\mathbb{R}^{V(S)}$ for all $\sigma\in S$. When we refer to the topology of a simplicial complex, we mean the topology of its standard geometric realization. 

Given two simplicial complexes $S$ and $T$. A function $f:V(S)\rightarrow V(T)$ is called a \textbf{simplicial map} if for any $\sigma\in S$ we have $f(\sigma)\subset T$. A simplicial map $f:V(S)\rightarrow V(T)$ induces a continuous map between the geometric realizations of $S$ and $T$ as follows. Any $s\in \vert S \vert$ can be written as a convex combination of the vertices of $S$, i.e. $\sum_{v\in V(S)}\beta_v v$. By sending $s=\sum_{v\in V(S)}\beta_v v$ to $\sum_{v\in V(S)}\beta_v f(v)$, we get a continuous map $f:\vert S \vert \rightarrow \vert T \vert$.

Given a set $X$ and an indexed collection of its subsets $\{U_{\alpha}\}_{\alpha\in A}$ the \textbf{nerve} of $\{U_{\alpha}\}_{\alpha\in A}$ is the abstract simplicial complex $N$ where $\sigma\subset A$ is in $N$ if and only if $\bigcap_{\alpha \in \sigma}U_{\alpha}$ is nonempty.

Nerves are commonly associated with covers of a topological space. A collection of open sets $U$ in a topological space $X$ is called a \textbf{good covering} if the union of the sets in $U$ is all of $X$ and if every intersection of open sets in $U$ is either contractible (i.e., is homotopy equivalent to a point) or empty. 

\begin{lemma}[Nerve lemma]\label{lemma: nerve lemma}
Given a topological space $X$ and a good covering $U$, let $N_U$ be the nerve associated to this covering. Then $N_U$ and $X$ are homotopy equivalent.
\end{lemma}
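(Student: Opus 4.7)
The plan is to construct mutually homotopy inverse maps $f: X \to \vert N_U \vert$ and $g: \vert N_U \vert \to X$. For $f$, use paracompactness (automatic when the cover is finite, as it will be in our application) to pick a partition of unity $\{\phi_\alpha\}$ subordinate to $\{U_\alpha\}$, and set $f(x) = \sum_\alpha \phi_\alpha(x)\, e_\alpha$, where $e_\alpha$ denotes the vertex of $\vert N_U \vert$ corresponding to $U_\alpha$. The image lies in $\vert N_U \vert$ because the indices with $\phi_\alpha(x) > 0$ are all contained in $\{\alpha : x \in U_\alpha\}$, a set whose intersection is nonempty and therefore spans a simplex of $N_U$.

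For $g$, proceed by induction on the skeleta of $\vert N_U \vert$. Send each vertex $e_\alpha$ to an arbitrary point of $U_\alpha$. Supposing $g$ has been defined on the $(k-1)$-skeleton so that the boundary of every $k$-simplex $\sigma$ lands in $\bigcap_{\alpha \in \sigma} U_\alpha$, invoke the contractibility of this intersection to extend $g$ over the interior of $\sigma$ into the same set. The inductive hypothesis is preserved because any face $\sigma$ of a larger simplex $\tau$ satisfies $\bigcap_{\alpha \in \sigma} U_\alpha \supset \bigcap_{\alpha \in \tau} U_\alpha$, so the boundary conditions cascade upward correctly.

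Finally, verify that the two composites are homotopic to the identity. For $f \circ g$, each simplex $\sigma$ is mapped into the contractible closed star of $\sigma$ inside $\vert N_U \vert$, so a straight-line homotopy within $\vert N_U \vert$ yields $f \circ g \simeq \text{id}$. For $g \circ f$, both $x$ and $g(f(x))$ lie inside the contractible set $\bigcap\{U_\alpha : \phi_\alpha(x) > 0\}$, and one stitches compatible local paths together by a further skeletal induction on $N_U$. This last step --- gluing local contractions into a single global homotopy --- is the main obstacle, as it demands consistent choices across overlapping intersections and uses the full strength of the good cover hypothesis. An alternative route that sidesteps the explicit stitching is to recognize both $X$ and $\vert N_U \vert$ as weakly equivalent to the homotopy colimit of the diagram of nonempty intersections indexed by simplices of $N_U$; this diagram is pointwise contractible under the good cover hypothesis, so its homotopy colimit reduces to $\vert N_U \vert$ itself, while the natural map from the homotopy colimit to $X$ is a weak equivalence because the cover is good.
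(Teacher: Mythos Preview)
The paper does not supply a proof of this lemma at all; it is quoted as a standard background fact with an implicit pointer to Hatcher, and is then invoked as a black box. So there is no argument in the paper to compare against, and for the paper's purposes any correct sketch would be acceptable.

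That said, your inductive construction of $g$ has a genuine gap. The inductive hypothesis you need---that the boundary of each $k$-simplex $\sigma$ already lands in $\bigcap_{\alpha\in\sigma}U_\alpha$---fails at the base step $k=1$: the endpoints $e_\alpha,e_\beta$ of an edge were sent to arbitrary points of $U_\alpha$ and $U_\beta$, and neither is required to lie in $U_\alpha\cap U_\beta$. The inclusion you quote, $\bigcap_{\alpha\in\sigma}U_\alpha\supset\bigcap_{\alpha\in\tau}U_\alpha$ for $\sigma$ a face of $\tau$, is correct but points the wrong way for your purpose: it says the face lands in a \emph{larger} set, not in the smaller one you need to invoke contractibility. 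The standard repair is to pass to the barycentric subdivision of $N_U$: send the barycentre of a simplex $\sigma$ to a point of $\bigcap_{\alpha\in\sigma}U_\alpha$; then a simplex of the subdivision corresponds to a chain $\sigma_0\subset\cdots\subset\sigma_k$, and all its vertices (and, inductively, its whole boundary) map into the single contractible set $\bigcap_{\alpha\in\sigma_0}U_\alpha$, so the extension step goes through. Your closing remark about the homotopy-colimit route is correct and is essentially how Hatcher's Section~4G organises the proof; that alternative does avoid the stitching problem you flag.
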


\section{Results}

\subsection{The setup}

We first prove the Muller-Satterthwaite Theorem \cite{muller1977equivalence}. As is well-known, the Gibbard-Satterthwaite Theorem \cite{gibbard1973manipulation}\cite{satterthwaite1975strategy} can easily be deduced as a corollary.

\begin{theorem}[Muller-Satterthwaite]
Suppose $n\geq 3$. A social choice function is monotonic and unanimous if and only if it is dictatorial.
\end{theorem}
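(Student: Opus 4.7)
My plan is to translate the combinatorial hypotheses on $f$ into a simplicial map between two nerve complexes and extract dictatorship from the induced map on $(n-2)$-dimensional homology.

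For the target, I would take $N_A$ to be the nerve of the cover $\{A\setminus\{a\}\}_{a\in A}$ of $A$, so that $\sigma\subset A$ is a simplex iff $\sigma\neq A$; its geometric realization is $\partial\Delta^{n-1}\simeq S^{n-2}$. For the source, cover $\mathscr{P}$ by the sets $U_{i,a}=\{(\succ_1,\dots,\succ_N):\operatorname{top}(\succ_i)\neq a\}$ for $(i,a)\in[N]\times A$, and let $N_\mathscr{P}$ be the resulting nerve: $\sigma\subset[N]\times A$ is a simplex exactly when, for each agent $i$, the slice $\{a:(i,a)\in\sigma\}$ is a proper subset of $A$. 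Because this cover factors as an $N$-fold product of per-agent covers, each of whose nerves is $\partial\Delta^{n-1}\simeq S^{n-2}$, a K\"unneth-type computation of the sort used by \citeasnoun{baryshnikov1993unifying} yields $H_{n-2}(N_\mathscr{P})\cong\mathbb{Z}^N$ with a distinguished basis $[\xi_1],\dots,[\xi_N]$ indexed by the $N$ agents.

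To convert $f$ into a simplicial map $\phi_f:N_\mathscr{P}\to N_A$, I would use monotonicity and unanimity to show that for any simplex $\sigma\in N_\mathscr{P}$ the function $f$ restricted to $\bigcap_{(i,a)\in\sigma}U_{i,a}$ is forced to avoid at least one alternative; the set of avoided alternatives is the image simplex in $N_A$, and compatibility across face inclusions packages the assignments into a bona fide simplicial map. Applying $H_{n-2}$ then produces a homomorphism $\phi_{f,*}:\mathbb{Z}^N\to\mathbb{Z}$, described by integers $k_i=\phi_{f,*}[\xi_i]$. Unanimity also supplies, for each $a\in A$, a ``unanimity class'' in $H_{n-2}(N_\mathscr{P})$ (supported on profiles in which every agent top-ranks $a$) that $\phi_{f,*}$ carries to a generator of $H_{n-2}(N_A)\cong\mathbb{Z}$; hence $\phi_{f,*}$ is surjective and $\gcd(k_1,\dots,k_N)=1$.

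The decisive step is to show that exactly one $k_i$ is nonzero and equals $\pm 1$. I would argue this by restricting $\phi_f$ to the canonical $(n-2)$-subsphere of $N_\mathscr{P}$ associated with agent $i$ -- obtained by varying only $\succ_i$ while holding the remaining agents at a common top-ranking -- and using the simplicial form of $\phi_f$ together with monotonicity to force the degree of each restriction to lie in $\{-1,0,1\}$. Combined with surjectivity, this forces exactly one $k_i=\pm 1$, identifying agent $i$ as the dictator, and a final application of monotonicity propagates $f(\succ_1,\dots,\succ_N)=\operatorname{top}(\succ_i)$ to every profile. The principal obstacle is precisely this rigidity claim: the pure homotopy types of $N_\mathscr{P}$ and $N_A$ admit many homomorphisms $\mathbb{Z}^N\to\mathbb{Z}$, and excluding everything but the coordinate projections must exploit the concrete combinatorics of preferences and the simplicial structure of $\phi_f$, rather than the homotopy type alone.
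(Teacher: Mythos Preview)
Your plan has two structural gaps that would stop it before the homological argument gets going.

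\textbf{The nerve is a join, not a product.} With your cover $\{U_{i,a}\}_{(i,a)\in[N]\times A}$, the simplex condition you correctly wrote down---each slice $\{a:(i,a)\in\sigma\}$ proper in $A$---is exactly the definition of the iterated simplicial \emph{join} $\partial\Delta^{n-1}*\cdots*\partial\Delta^{n-1}$ ($N$ factors), not of a Cartesian product. Since $S^{n-2}*S^{n-2}*\cdots*S^{n-2}\simeq S^{N(n-1)-1}$, your $N_\mathscr{P}$ is a single sphere of dimension $N(n-1)-1$, and for $N\geq 2$ one has $H_{n-2}(N_\mathscr{P})=0$, not $\mathbb{Z}^N$. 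A K\"unneth computation only applies once you have a genuine product, which would require covering $\mathscr{P}$ by sets of the form $U_{1,a_1}\cap\cdots\cap U_{N,a_N}$; but then the vertex set and the map to $N_A$ change completely. The paper instead covers $\mathscr{P}$ by sets $U_{ij}^{\sigma}$ recording, for a fixed pair $(a_i,a_j)$, the sign of $a_i$ versus $a_j$ in each voter's order; the resulting nerve is neither a product nor a join, and the identification $H^{n-2}\cong\mathbb{Z}^N$ goes through a subspace--arrangement model and Alexander duality (Theorem~\ref{thm: profile groups}).

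\textbf{The simplicial map does not exist on your cover.} Even at the level of vertices, you would need that $f(U_{i,a})$ omits at least one alternative. But take any monotonic, unanimous $f$ that is a dictatorship of some agent $j\neq i$: for every $b\in A$ there is a profile in $U_{i,a}$ in which agent $j$ top-ranks $b$, so $f(U_{i,a})=A$ and the vertex $(i,a)$ has nowhere to go in $N_A$. Monotonicity and unanimity together simply do not pin down $f$ on sets defined by top ranks alone. The pairwise cover is what makes Lemma~\ref{lemma: main simplicial map} work: on each $U_{ij}^{\sigma}$ one can push $a_i,a_j$ to the top without leaving the set, and monotonicity plus unanimity then force $f(U_{ij}^{\sigma})$ to miss one of $a_i,a_j$. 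That is the combinatorial content your cover by top ranks cannot see, and it is precisely what produces a well-defined simplicial map and ultimately the rigidity (degrees in $\{0,1\}$, summing to $1$ along the diagonal cycle) that singles out a dictator.
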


Our approach will be to construct covers of both $\mathscr{P}$ and $A$ such that a monotonic and unanimous social choice function will result in a simplicial map between the nerve of the cover of $\mathscr{P}$ and the nerve of the cover of $A$.

Let $1\leq i<j\leq n$. Define $$ U_{ij}^{+}=\{\succ \in P : a_i \succ a_j\}\hspace{2cm}U_{ij}^{-}=\{\succ \in P : a_i \prec a_j\}$$

The sets $U_{ij}^+$ and $U_{ij}^-$ cover $P$. Denote by $I_P=\{(i,j,x): 1\leq i < j \leq n, x\in \{+,-\}\}$ the index set. Let $N_P$ be the nerve of this covering, so that $J\subset I_P$ is in $N_P$ if and only if $\bigcap_{(i,j,x)\in J}U_{ij}^x$ is nonempty.

Let $\sigma=(\sigma_1,\dots,\sigma_N)\in \{-,+\}^N$. Define $$U_{ij}^{\sigma}=\{(\succ_1,\dots, \succ_N) :\hspace{0.1cm} \succ_l \in U_{ij}^{\sigma_l} \text{ for all }l\}$$ The sets $U_{ij}^{\sigma}$ cover $\mathscr{P}$. Again, denote by $I_\mathscr{P}=\{(i,j,\sigma): 1\leq i < j \leq n, \sigma\in \{+,-\}^N\}$ the index set of this covering. Let $N_\mathscr{P}$ be its nerve so that $J\subset I_\mathscr{P}$ is in $N_\mathscr{P}$ if and only if $\bigcap_{(i,j,\sigma)\in J}U_{ij}^\sigma$ is nonempty. 

Finally, the sets $U_i:=A-\{a_i\}$ ranging over $1\leq i\leq n$ cover $A$. Let $N_A$ be the nerve of this covering so that $J\subset \{1,\dots, n\}$ is in $N_A$ if and only if $\bigcap_{j\in J} U_j$ is nonempty.

\begin{lemma}\label{lemma: main simplicial map}
    If $f$ is monotonic and unanimous, it induces a well-defined simplicial map $f^s:N_\mathscr{P}\rightarrow N_A$ where $(i,j,\sigma)\mapsto i$ if $f(U_{ij}^\sigma)\subset U_i$ and $(i,j,\sigma)\mapsto j$ if $f(U_{ij}^\sigma)\subset U_j$.
\end{lemma}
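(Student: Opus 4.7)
The plan is to establish two things: \textbf{(i)} well-definedness of the vertex assignment---for each $(i,j,\sigma) \in I_\mathscr{P}$, at least one of $f(U_{ij}^\sigma) \subset U_i$ or $f(U_{ij}^\sigma) \subset U_j$ holds; and \textbf{(ii)} the simplicial condition---$J \in N_\mathscr{P}$ implies $f^s(J) \in N_A$.

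I would first dispose of (ii), which is essentially immediate. If $J \in N_\mathscr{P}$, pick any $\succ^* \in \bigcap_{(i,j,\sigma)\in J} U_{ij}^\sigma$. For every $(i,j,\sigma)\in J$ we then have $f(\succ^*) \in f(U_{ij}^\sigma) \subset U_{f^s(i,j,\sigma)}$, so $f(\succ^*)$ is a common element of all the $U_k$ with $k \in f^s(J)$, witnessing $f^s(J) \in N_A$.

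The substance is (i), which I expect to be the main obstacle. My plan is a direct contradiction argument using only monotonicity. Suppose $\succ,\succ' \in U_{ij}^\sigma$ with $f(\succ) = a_i$ and $f(\succ') = a_j$. I would construct a single auxiliary profile $\succ^\dagger$ in which, for each agent $l$, the top two ranked alternatives are exactly $\{a_i, a_j\}$, placed in the order prescribed by $\sigma_l$ (with the remaining alternatives placed in any fixed order below). Passing from $\succ$ to $\succ^\dagger$ only raises $a_i$ in each agent's ranking: when $\sigma_l=+$, $a_i$ moves to the top; when $\sigma_l = -$, $a_i$ moves to position two (directly below $a_j$), which is at least as high as its original position in $\succ_l$, since $a_j \succ_l a_i$ already. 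Therefore monotonicity applied to $\succ$ forces $f(\succ^\dagger) = a_i$. The symmetric analysis applied to $\succ'$ (this time lifting $a_j$ to position one or two, as dictated by $\sigma_l$) forces $f(\succ^\dagger) = a_j$, contradicting $a_i \neq a_j$. The conceptual reason this works---and the reason the two ``maximal lifting'' constructions land at the \emph{same} profile---is that the $\sigma$-pattern already pins down the relative order of $a_i$ and $a_j$ in every coordinate. Note that unanimity is not invoked for this lemma; it will enter only in the subsequent topological analysis of $N_A$.
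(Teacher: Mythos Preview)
Your argument for (ii) is identical to the paper's, and your argument for (i) uses exactly the same auxiliary profile as the paper (there called $\succ^*$): both of you lift an arbitrary profile in $U_{ij}^\sigma$ to the profile that places $\{a_i,a_j\}$ on top in the $\sigma$-prescribed order, and both of you invoke monotonicity for that lift. The only difference is the logical packaging: the paper first pins down $f(\succ^*)\in\{a_i,a_j\}$ and then argues forward, while you argue by contradiction from two hypothetical profiles meeting at $\succ^\dagger$.

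One small correction to your closing remark. Your contradiction argument indeed shows, from monotonicity alone, that $a_i$ and $a_j$ cannot \emph{both} lie in $f(U_{ij}^\sigma)$, i.e.\ that \emph{at least one} of the inclusions $f(U_{ij}^\sigma)\subset U_i$, $f(U_{ij}^\sigma)\subset U_j$ holds. But ``well-defined'' for the formula in the lemma also requires that not \emph{both} inclusions hold, else $(i,j,\sigma)$ would be sent to two distinct vertices. Ruling that out is exactly where unanimity enters (as in the paper: at $\succ^\dagger$ one has $f(\succ^\dagger)\in\{a_i,a_j\}$, so one of $a_i,a_j$ is hit). So unanimity is, strictly speaking, invoked---not for the substantive monotonicity step you isolate, but to make the stated recipe unambiguous.
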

\begin{proof}
    Fix $i,j,\sigma$ and let $(\succ^*_1,\dots, \succ^*_N)$ be any profile in $U_{ij}^\sigma$ where all agents rank $a_i$ and $a_j$ above all other alternatives. Let $(\succ^{**}_1,\dots, \succ^{**}_N)$ be the profile derived from $(\succ^*_1,\dots, \succ^*_N)$ by moving $a_i$ to the top of all rankings. First note that $f(\succ^*_1,\dots, \succ^*_N)\in \{a_i,a_j\}$ since otherwise by monotonicity we would have $f(\succ^{**}_1,\dots, \succ^{**}_N)\neq a_i$, a violation of unanimity. If, say $f(\succ^*_1,\dots, \succ^*_N)=a_i$ then $f(U_{ij}^\sigma)\subset A-\{a_j\}$ since any $(\succ_1,\dots,\succ_N)\in U_{ij}^\sigma$ with $f(\succ_1,\dots,\succ_N)=a_j$ would violate monotonicity as we should have $f(\succ_1,\dots,\succ_N)= f(\succ^*_1,\dots, \succ^*_N)$. Finally, it is clear that $f^s$ is a simplicial map since for any $J\in N_{\mathscr{P}}$ if $(\succ_1,\dots,\succ_N)\in \bigcap_{(i,j,\sigma)\in J}U_{ij}^\sigma$ then $f(\succ_1,\dots,\succ_N)\in \bigcap_{(i,j,\sigma)\in J}f(U_{ij}^\sigma)$.
\end{proof}

Since $f$ induces a simplicial map from $N_\mathscr{P}$ to $N_A$, it also induces homomorphisms $f_*$ from the homology groups of $N_\mathscr{P}$ to the homology groups $N_A$. Likewise it induces homomorphisms $f^*$ from the cohomology groups of $N_A$ to those of $N_\mathscr{P}$. By studying these maps, we will deduce the theorem.

\subsection{The topology of $N_\mathscr{P}$ and $N_A$}

The topology of $N_A$ is simple. 
\begin{lemma}
    The simplicial complex $N_A$ is homotopy equivalent to the $(n-2)$-sphere.
\end{lemma}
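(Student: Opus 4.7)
The plan is to identify $N_A$ explicitly rather than invoke the nerve lemma (which would not apply cleanly here, since $A$ is discrete and intersections of two or more $U_i$'s are not contractible as topological subspaces of $A$). The whole task reduces to reading off the face structure of $N_A$ from the definition of the cover and recognizing a standard complex.

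The first step is to describe the faces of $N_A$. For any $J \subset \{1,\dots,n\}$, the definition of $U_i$ as $A - \{a_i\}$ gives immediately
\[
\bigcap_{j \in J} U_j \;=\; A \setminus \{a_j : j \in J\}.
\]
This set is nonempty if and only if $J$ is a proper subset of $\{1,\dots,n\}$. Hence $N_A$ is precisely the abstract simplicial complex on the vertex set $\{1,\dots,n\}$ whose faces are all proper subsets.

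The second step is to recognize this complex. By definition, the complex consisting of all proper subsets of an $n$-element set is the boundary $\partial \Delta^{n-1}$ of the $(n-1)$-simplex on those vertices. Its standard geometric realization is the topological boundary of the convex hull of the $n$ standard unit vectors in $\mathbb{R}^n$, which is homeomorphic (and therefore homotopy equivalent) to $S^{n-2}$. This concludes the argument.

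There is no real obstacle here; the only mildly subtle point is resisting the temptation to apply the nerve lemma to the cover $\{U_i\}$ of $A$, which fails because intersections of two or more $U_i$'s in the discrete space $A$ are disconnected rather than contractible. Direct combinatorial identification of $N_A$ with $\partial \Delta^{n-1}$ bypasses this issue entirely.
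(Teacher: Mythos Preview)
Your proof is correct and follows essentially the same approach as the paper: both identify $N_A$ with the boundary of the $(n-1)$-simplex by observing that the only empty intersection of the $U_i$'s is the full one. Your version is simply more explicit, and your side remark about the nerve lemma not applying to the discrete cover of $A$ is a valid observation, though not needed for the argument itself.
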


\begin{proof}
    The only intersection of the $U_i$ which is empty is the intersection of all the $U_{i}$ so the simplicial complex $N_A$ is isomorphic to the boundary of the standard $(n-1)$-simplex. 
\end{proof}

There are $n$ maximal faces of $N_A$ corresponding to $\cap_{y\neq x} U_y$ for each $x\in A$. Denote these faces $F_x$. 

The topology of $N_\mathscr{P}$ has already been described in 
\cite{baryshnikov1993unifying}. We include the calculations here for completeness.

To calculate the topology of $N_\mathscr{P}$ we first construct a manifold $M$ with a {\em good} covering $\{V_\alpha\}$. These are constructed so that the nerve $N_M$ of the covering is identical to $N_\mathscr{P}$. In this case, the homotopy types of $M$ and $N_M$ (and therefore $N_\mathscr{P}$) coincide by lemma \ref{lemma: nerve lemma}, and the focus shifts to figuring out the topology of $M$.

Let $W=\{(x_1,\dots,x_n) \in \mathbb{R}^n: \sum_i x_i = 0\}$. The manifold $M$ will be an open subset of the $N(n-1)$-dimensional vector space $V:=W^N$.

Let $1\leq i<j\leq n$. Denote the intersection of the open halfspaces with $W$ $$ K_{ij}^{+}=\{(x_1,\dots, x_n) \in W: x_i > x_j\}\hspace{2cm}K_{ij}^{-}=\{(x_1,\dots, x_n)\in W : x_i < x_j\},$$
and define, for a vector of signs $\sigma=(\sigma_1,\dots,\sigma_N)\in \{-,+\}^N$ the open polyhedral cones $$K_{ij}^{\sigma}=\{(\bar{x}^1,\dots, \bar{x}^N) :\hspace{0.1cm} \bar{x}^l \in K_{ij}^{\sigma_l} \text{ for all }l\}$$ as products of such halfspaces over voters.

Now we can introduce $M$ as the union of these cones: \[M = \bigcup_{(i,j,\sigma)\in I_\mathscr{P}}K_{ij}^\sigma.
\]
The sets $K_{ij}^{\sigma}$ are convex open polyhedra so their nonempty intersections are again convex open  polyhedra and are therefore contractible. Let $N_M$ be the nerve of the covering by $M$ of these sets so that $J\subset I_{\mathscr{P}}$ is in $N_M$ if and only if $\bigcap_{(i,j,\sigma)\in J}K_{ij}^\sigma$ is nonempty.

\begin{lemma}\label{lemma: simplicial isomorphism}
    $N_M=N_\mathscr{P}$.
\end{lemma}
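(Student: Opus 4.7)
The plan is to set up a direct correspondence between sufficiently generic points of $V$ and preference profiles in $\mathscr{P}$, and to verify that this correspondence carries each cone $K_{ij}^\sigma$ to the set $U_{ij}^\sigma$. Since $N_M$ and $N_\mathscr{P}$ are both indexed by $I_\mathscr{P}$, it suffices to check, for every $J\subset I_\mathscr{P}$, that $\bigcap_{(i,j,\sigma)\in J}K_{ij}^\sigma$ is nonempty in $V$ if and only if $\bigcap_{(i,j,\sigma)\in J}U_{ij}^\sigma$ is nonempty in $\mathscr{P}$.

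The correspondence I would use sends $\bx=(x_1,\dots,x_n)\in W$ with all coordinates distinct to the linear order $\succ_\bx$ on $A$ defined by $a_i \succ_\bx a_j \iff x_i>x_j$. Applied componentwise, this gives a map $\phi$ from the set of points $(\bx^1,\dots,\bx^N)\in V$ whose components each have distinct coordinates into $\mathscr{P}$. By construction, such a point lies in $K_{ij}^\sigma$ precisely when $\phi(\bx^1,\dots,\bx^N)\in U_{ij}^\sigma$. For the forward implication, I would observe that $C := \bigcap_{(i,j,\sigma)\in J}K_{ij}^\sigma$ is open in $V$, while the set of non-generic points (those where some component $\bx^l$ has a pair of equal coordinates) is a finite union of hyperplanes. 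Hence any nonempty $C$ contains a generic point, whose image under $\phi$ is a profile witnessing $J\in N_\mathscr{P}$.

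For the reverse implication, given any profile $(\succ_1,\dots,\succ_N)\in\bigcap_{(i,j,\sigma)\in J}U_{ij}^\sigma$, I would construct an explicit preimage voter by voter: for each $l$, let $r_i^l\in\{0,1,\dots,n-1\}$ be the rank of $a_i$ in $\succ_l$ (the top-ranked alternative getting $n-1$), and set $x_i^l = r_i^l - (n-1)/2$, so that $\bx^l\in W$. Then $x_i^l>x_j^l\iff a_i\succ_l a_j$, and the point $(\bx^1,\dots,\bx^N)$ lies in $\bigcap_{(i,j,\sigma)\in J}K_{ij}^\sigma$. The only step requiring any care is the genericity argument in the forward direction, and that is routine since the cones are open and the non-generic locus is a finite union of hyperplanes.
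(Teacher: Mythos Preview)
Your proof is correct and follows essentially the same approach as the paper: in both directions one passes between generic points of $V$ and profiles via the utility-representation correspondence, with the only work being to ensure genericity in the forward direction. The paper carries this out by an explicit $\epsilon/2$-perturbation while you invoke the openness of $C$ and the fact that the non-generic locus is a finite union of hyperplanes, but this is a cosmetic difference rather than a substantive one.
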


\begin{proof}
Suppose $J\in N_{M}$ so that there is some $(\bar{x}^1,\dots, \bar{x}^N)$ in $K^\sigma_{ij}$ for all $(i,j,\sigma)\in J$. For each $l$, let $\epsilon_l$ be the smallest distance between any two distinct entries of $\bar{x}^l$ and let $\epsilon = \min \epsilon_l$. Choose $(\bar{y}^1,\dots, \bar{y}^N)$ so that $\vert \bar{y}^l_i-\bar{x}^l_i \vert < \epsilon/2$ for all $i,l$ and such that each entry of $\bar{y}^l$ is distinct for each $l$. Let $(\succ_1,\dots, \succ_N)$ be such that $a_i\succ_l a_j$ if and only if $\bar{x}^l_i> \bar{x}^l_j$ for all $i,j,l$. $(\bar{y}^1,\dots, \bar{y}^N)$ breaks the 
indifferences of $(\bar{x}^1,\dots, \bar{x}^N)$ so that $(\succ_1,\dots, \succ_N)\in \bigcap_{(i,j,\sigma)\in J}U_{ij}^\sigma$ and $J\in N_{\mathscr{P}}$.

Conversely, suppose $J\in N_{\mathscr{P}}$ so that there is some  $(\succ_1,\dots, \succ_N)\in U_{ij}^\sigma$ for every $(i,j,\sigma)\in J$. Let $(\bar{x}^1,\dots, \bar{x}^N)$ be any vector of utility representations. Without loss, we can rescale each $\bar{x}^i$ so that the utilites for each agent sum to zero. Then for each $(i,j,\sigma)\in J$ we have $(\bar{x}^1,\dots, \bar{x}^N)\in K^\sigma_{ij}$ and $J\in N_M$.
\end{proof}

The nerve lemma states that $N_M$ is homotopic to the manifold $M$. To further describe $M$, let $\Lambda$ be the set of functions from $\{(i,j):\hspace{0.1cm}1\leq i<j\leq n\}$ to the integers $\{1,2,\dots, N\}$. For any $\lambda\in \Lambda$, let $$R^\lambda=\{(\bar{x}^1,\dots, \bar{x}^N)\in V : \text{ for all }1\leq i<j\leq n,\hspace{0.1cm} \bar{x}^{\lambda(i,j)}_i=\bar{x}^{\lambda(i,j)}_j\}$$

\begin{claim}
    The manifold $M$ is the complement $V -\cup_{\lambda\in \Lambda}R^\lambda$.
\end{claim}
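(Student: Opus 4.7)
The plan is to prove the claim by double inclusion, characterizing membership in each set via a sign-pattern condition on the coordinates.

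First I would unpack membership in $M$. A point $\bar{x} = (\bar{x}^1,\dots,\bar{x}^N) \in V$ lies in $M$ exactly when there exist indices $i<j$ and a sign vector $\sigma \in \{+,-\}^N$ such that $\bar{x} \in K_{ij}^\sigma$. Since $K_{ij}^\sigma$ is the product (over voters $l$) of the open half-spaces $K_{ij}^{\sigma_l}$, the point $\bar{x}$ lies in some such $K_{ij}^\sigma$ if and only if there exist indices $i<j$ with $\bar{x}^l_i \neq \bar{x}^l_j$ for every $l$ — in that case one simply sets $\sigma_l = +$ when $\bar{x}^l_i > \bar{x}^l_j$ and $\sigma_l = -$ otherwise.

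Next I would negate this characterization to describe the complement: $\bar{x} \notin M$ if and only if for every pair $i<j$ there exists at least one voter $l$ with $\bar{x}^l_i = \bar{x}^l_j$. For each such $\bar{x}$, selecting one such $l$ for each pair $(i,j)$ defines a function $\lambda \in \Lambda$, and then by construction $\bar{x}^{\lambda(i,j)}_i = \bar{x}^{\lambda(i,j)}_j$ for all $i<j$, i.e.\ $\bar{x} \in R^\lambda$. Conversely, if $\bar{x} \in R^\lambda$ for some $\lambda$, then for every pair $(i,j)$ the voter $\lambda(i,j)$ witnesses an equality, so no pair $(i,j)$ has strict inequality across all voters, and hence $\bar{x} \notin M$.

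Combining both directions yields $V \setminus M = \bigcup_{\lambda \in \Lambda} R^\lambda$, which is the desired identity. The argument is essentially bookkeeping over the quantifier structure; there is no real obstacle, only the need to be careful that the finite choice of a voter for each pair is exactly what produces the function $\lambda$.
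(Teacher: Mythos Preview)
Your proof is correct and follows essentially the same double-inclusion argument as the paper's proof. In fact, you spell out the quantifier negation (choosing a witnessing voter for each pair to build $\lambda$) more explicitly than the paper does, which leaves that step implicit.
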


\begin{proof}
    Fix $i<j$ and $\sigma$. Any $(\bar{x}^1,\dots, \bar{x}^N)$ in $K_{ij}^{\sigma}$ is not in $\cup_{\lambda\in \Lambda}R^\lambda$ since for any $(\bar{y}^1,\dots, \bar{y}^N)$ in $\cup_{\lambda\in \Lambda}R^\lambda$ there is some $l$ such that $\bar{y}^l_i=\bar{y}^l_j$. This proves that $M\subset V -\cup_{\lambda\in \Lambda}R^\lambda.$
    Conversely, for any $(\bar{y}^1,\dots, \bar{y}^N)$ in $V -\cup_{\lambda\in \Lambda}R^\lambda$
    there is some $i<j$ such that $\bar{y}^l_i\neq\bar{y}^l_j$ for all $l$. In this case,   
    $(\bar{y}^1,\dots, \bar{y}^N) \subset \cup_{\sigma}K_{ij}^{\sigma}\subset \cup_{\sigma}\cup_{i<j}K_{ij}^{\sigma}=M$.
\end{proof}

Thus the manifold $M=V -\cup_{\lambda\in \Lambda}R^\lambda$ is the complement to the union of a collection of finitely many linear spaces in $V$, a construct often referred to as the {\em arrangement of linear subspaces} \cite{bjorner1994subspace}.


\begin{theorem}\label{thm: profile groups}
    The cohomology groups $H^k(N_M)$ are $0$ in positive dimensions less than $n-2$ and $H^{n-2}(N_M)\cong \mathbb{Z}^N$.
\end{theorem}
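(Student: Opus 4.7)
By the nerve lemma $N_M$ is homotopy equivalent to $M$, so the plan is to compute the singular cohomology of the open manifold $M \subset V$ directly. The key step is to compare $M$ with the slightly larger open set
\[
M' := V \setminus \bigcup_{l=1}^N D_l, \qquad D_l := \{(\bar{x}^1, \dots, \bar{x}^N) \in V : \bar{x}^l = 0\},
\]
noting that each $D_l$ is one of the $R^{\lambda}$'s (namely the one sending every pair to voter $l$) and has codimension $n-1$. This gives $M \subset M'$ and $M' = (W-\{0\})^N \simeq (S^{n-2})^N$, whose cohomology by K\"unneth is $\mathbb{Z}$ in degree $0$, $\mathbb{Z}^N$ in degree $n-2$ (one generator per sphere factor), and zero in all other positive degrees strictly below $n-2$.

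\textbf{Comparing $M$ with $M'$.} The next step is to show that the inclusion $M \hookrightarrow M'$ induces an isomorphism on $H^k$ for $k \leq n - 2$. The complement $M' \setminus M$ is the union of $R^\lambda \cap M'$ over those $\lambda$ for which $R^\lambda$ is not contained in any $D_l$, and the central technical claim is that every such $R^\lambda$ has codimension at least $n$ in $V$ (hence in $M'$). Granting this, the long exact sequence of the pair $(M', M)$, combined with the standard vanishing of local cohomology with support on a closed subset of codimension at least $n$ in all degrees below $n$, yields $H^k(M) \cong H^k(M')$ in the desired range and completes the argument.

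\textbf{The codimension bound (main obstacle).} This is where I expect the real work to lie. For $\lambda \in \Lambda$, set $a_l := n - c_l$, where $c_l$ is the number of connected components of the graph on $\{1, \dots, n\}$ with edge set $\lambda^{-1}(l)$. Since $R^\lambda$ factors as $\prod_l R^\lambda_l$ with $R^\lambda_l$ the subspace of $W$ cut out by the equations $x_i = x_j$ for $(i,j) \in \lambda^{-1}(l)$, which has dimension $c_l - 1$, the codimension of $R^\lambda$ in $V$ equals $\sum_l a_l$. Moreover $R^\lambda \subset D_l$ precisely when the graph $\lambda^{-1}(l)$ is connected and spanning, i.e.\ $a_l = n - 1$, so the hypothesis that $R^\lambda$ lies in no $D_l$ amounts to $a_l \leq n - 2$ for every $l$. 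Using the elementary bound that a graph on $n$ vertices with $c$ components has at most $\binom{n - c + 1}{2}$ edges (realized by a clique plus isolated vertices), and since the $\lambda^{-1}(l)$ partition the $\binom{n}{2}$ pairs,
\[
n(n-1) = 2\binom{n}{2} \leq \sum_l a_l(a_l+1) = \sum_l a_l^2 + \sum_l a_l \leq (n-2)\sum_l a_l + \sum_l a_l = (n-1)\sum_l a_l,
\]
so $\sum_l a_l \geq n$, as claimed. The ensuing homotopy-theoretic deduction is then routine; this combinatorial codimension estimate is the crux.
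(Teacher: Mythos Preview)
Your argument is correct. Structurally it matches the paper's: both compare $M=V\setminus R$ with the larger complement $M'=V\setminus R'$, where $R'=\bigcup_l D_l$ is the union of the $N$ maximal subspaces, observe $M'\simeq (S^{n-2})^N$, and then show the remaining subspaces are too small to affect cohomology through degree $n-2$.

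The two technical steps, however, are carried out differently. For the codimension bound, the paper rewrites $\dim R^\lambda$ via Euler characteristics as $(n-1)N-\binom{n}{2}+\sum_k h_1(\Gamma_k)$ and argues by ``repainting'' connected components that $\sum_k h_1(\Gamma_k)$ is strictly maximized by a monochromatic coloring. Your edge-count inequality $\binom{n}{2}\le \sum_l \binom{a_l+1}{2}$ together with $a_l\le n-2$ is more elementary and gives the sharp bound $\sum_l a_l\ge n$ in one line. For the topological comparison, the paper one-point compactifies $V$, applies Alexander duality to pass from $H^*(V\setminus R)$ and $H^*(V\setminus R')$ to the relative homologies of $(R+p_\infty,p_\infty)$ and $(R'+p_\infty,p_\infty)$, and then uses the long exact sequence of the triple together with a cell-dimension count. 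Your route via the long exact sequence of the pair $(M',M)$ and the vanishing of $H^k(M',M)\cong H^{\mathrm{BM}}_{\dim V-k}(M'\setminus M)$ for $k<n$ (since $M'\setminus M$ sits inside a finite union of linear subspaces of dimension at most $\dim V - n$) is the Poincar\'e--Lefschetz dual of the same idea but avoids the compactification. Both approaches are standard for subspace arrangements; yours is slightly more self-contained, while the paper's makes the connection to the Goresky--MacPherson/arrangement literature more visible. You should state explicitly that $M'$ is an orientable manifold (as an open subset of $V$) so that the duality $H^k(M',M)\cong H^{\mathrm{BM}}_{\dim V-k}(M'\setminus M)$ holds over $\mathbb{Z}$.
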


\begin{proof}
The mapping $\lambda$ which associates voter $k=\lambda(ij)$ to each unordered pair $(ij)$ can be interpreted as the coloring edges of the complete graph on $n$ vertices $A$ with $N$ colors.

For a color $k$, consider the graph $\Gamma_k$ with vertex set $A$ and the edges colored $k$. For a point in $R^\lambda$, the coordinates in its $k$-th component are the same if they lie in the same connected component of $\Gamma_k$ (and generically different, otherwise).

This implies that the dimension of the linear subspace $R^\lambda$ is equal to the sum over colors $k$ of the numbers of connected components of $\Gamma_k$, minus 1 (to reflect the constraint that the coordinates sum up to $0$).

Equivalently, that dimension is given by
\[
d(\lambda)=\sum_k \left(n-|E(\Gamma_k)|+h_1(\Gamma_k) - 1\right),
\]
where $|E(\Gamma_k)|$ is the number of edges of color $k$, and $h_1(\Gamma_k)$ is the rank of the $1$st homology group of $\Gamma_k$ (this follows from two definitions of Euler characteristics of a graph, via numbers of simplices and ranks of homologies).

Collecting all the terms, we obtain 
\[
d(\lambda)=(n-1)N-\binom{n}{2}+\sum_k h_1(\Gamma_k).
\]
We notice that each $h_1(\Gamma_k)$ is the sum of the ranks of $1$st homology groups over edge-connected components of $\Gamma_k$.

We claim that the $\sum_k h_1(\Gamma_k)$ is maximized when all edges are of the same color. Indeed, one can see that if one repaints a connected component in $\Gamma_k$ into some other color $l$, the $1$-cycles in that component of $\Gamma_k$ becomes  $1$-cycles of $\Gamma_l$, while new  $1$-cycles might appear, hence non-decreasing the total $h_1$. Iterating, we obtain that all the dimensions of $R^\lambda$ do not exceed those of constant $\lambda$. It remains to notice that at the last repainting step, at lease one new  $1$-cycle is introduced.

It follows that there are $N$ linear spaces of maximal dimension equal to $(n-1)N-n+1=(N-1)(n-1)$ among linear subspaces $R^\lambda$: the $k$-th subspace corresponds to all edges $(ij)$ colored in color $k$. For such constant $\lambda\equiv k$, $R^\lambda=\{(\bar{x}^1,\dots, \bar{x}^N) : \bar{x}^k_1=\cdots=\bar{x}^k_n\}$. All subspaces $R^\lambda$ with non-constant $\lambda$'s have smaller dimensions.

Denote the union of the linear subspaces $R^\lambda$ of maximal dimensions as
\[
R':=\cup_{\mbox{\tiny\bf constant } \lambda} R^\lambda\subset\cup_\lambda R^\lambda =: R.
\]

One can easily verify that $V-R'$ is the product of $N$ $(n-1)$-dimensional real spaces with the origins removed, and thus has the homotopy type of the product of $N$ spheres of dimension $(n-2)$. In particular, its cohomologies are zero in the dimensions between $0$ and $(n-2)$, and the rank of $H^{n-2}(V-R')$ is $N$.

Comparing the cohomologies of $V-R'$ and $V-R$ is easier using Alexander duality. Consider the $(N-1)(n-1)$-dimensional sphere $S=p_\infty\cup V$; and compactify $R, R'$ by adding the point $p_\infty$ at infinity to these arrangements. Then Alexander duality asserts that
\[
H^l(V-R)\cong H_{(N-1)(n-1)-l-1}(R+p_\infty,p_\infty),H^l(V-R')\cong H_{(N-1)(n-1)-l-1}(R'+p_\infty,p_\infty).
\]

Consider the long exact sequence for the triple $\{p_\infty\}\subset R'+p_\infty\subset R+p_\infty$
\[
\cdots\rightarrow H_l(R'+p_\infty,p_\infty)\rightarrow H_l(R+p_\infty,p_\infty)\rightarrow H_l(R+p_\infty,R'+p_\infty)\rightarrow H_{l-1}(R'+p_\infty,p_\infty)\rightarrow\cdots
\]
The fact that $R,R'$ has no cells in dimensions above maximal (i.e., $(N-1)(n-1)-(n-2)-1$), and that the pair $R,R'$ has no cells in the maximal dimension, one arrives at the desired conclusion.
\end{proof}

It will be useful to have a basis for $H_{n-2}(N_\mathscr{P})$ and  $H^{n-2}(N_\mathscr{P})$. To that end, we will first calculate the basis for the case when $N=1$. In this case, $N_\mathscr{P}=N_{P}$ and $M$ is simply $\mathbb{R}^n$ minus the diagonal $D$ where  $D=\{x\in \mathbb{R}^n : x_i=x_j \text{ for all }i,j\}$. 

Let $\Delta_{P}$ be the $n(n-1)-1$ simplex with the same vertices as $N_{P}$. Consider the unoriented cyclic graph $g^\circ$ with vertices $\{1,2,\dots,n\}$ and edges $(1,2),(2,3),\dots,(n-1,n),(n,1)$. Any orientation $g$ of the edges of $g^\circ$, gives a face in $\Delta_{P}$ including the vertices $(i,j,x)$ such that $x=+$ if $i\leftarrow i+1$ and $x=-$ if $i\rightarrow i+1$ in $g$. We denote this simplex as $\delta(g)$ and write its boundary as $h(g)$.

\begin{lemma}
    If $g$ is an oriented cycle, $h(g)$ generates $H_{n-2}(N_P)$. Otherwise it is zero.
\end{lemma}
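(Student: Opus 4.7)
The plan is to split the argument by whether $g$ is an oriented cycle. If not, then the orientation of $g^\circ$ is acyclic as a directed graph---the only orientations of the $n$-cycle containing a directed cycle are the two fully-directed ones---so the corresponding $n$ pairwise constraints admit a linear extension and $\delta(g) \in N_P$. Thus $h(g) = \partial \delta(g)$ is a boundary in $N_P$, vanishing in $H_{n-2}(N_P)$. If instead $g$ is an oriented cycle, the $n$ constraints form a Condorcet cycle and $\delta(g) \notin N_P$; however, removing any single vertex leaves $n-1$ cyclic constraints forming a directed path on $n$ vertices, realized by a unique linear order. So every facet of $\delta(g)$ lies in $N_P$, making $h(g)$ a bona fide $(n-2)$-cycle in $N_P$.

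To show $[h(g)]$ generates $H_{n-2}(N_P) \cong \mathbb{Z}$, I will pass to a simpler subcomplex. Let $N_P^{\mathrm{cyc}} \subseteq N_P$ be the full subcomplex on the vertices $(i, i+1, x)$ indexed by the $n$ cyclic pairs (indices mod $n$). This is precisely the nerve of the sub-cover of $M = W \setminus \{0\}$ by the $2n$ cyclic half-spaces $K_{i,i+1}^{\pm}$. This sub-cover is still a cover of $M$, since $\bigcap_i \{x_i = x_{i+1}\} \cap W = \{0\}$, and it is still a good cover because nonempty intersections of open half-spaces are convex. Applying the nerve lemma functorially to the inclusion of covers, the inclusion $|N_P^{\mathrm{cyc}}| \hookrightarrow |N_P|$ is a homotopy equivalence, so it suffices to show $h(g)$ generates $H_{n-2}(N_P^{\mathrm{cyc}})$.

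The structure of $N_P^{\mathrm{cyc}}$ is now transparent. The ambient complex of \emph{all} subsets of its $2n$ vertices containing at most one vertex per cyclic pair is the join of $n$ copies of $S^0$, i.e., a triangulation of $S^{n-1}$ with $2^n$ top-dimensional simplices, one per sign pattern. A top simplex fails to lie in $N_P^{\mathrm{cyc}}$ exactly when its full set of $n$ cyclic constraints is inconsistent, and by the case analysis above this occurs precisely for the two oriented cycles---whose sign patterns are antipodal, so the two excluded top simplices share no vertices. Hence $|N_P^{\mathrm{cyc}}|$ is $S^{n-1}$ with the interiors of two disjoint top $(n-1)$-simplices removed, which deformation retracts onto $S^{n-2}$ with $H_{n-2}$ generated by the boundary of either removed simplex. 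Since $h(g) = \partial \delta(g)$ is exactly such a boundary, it generates $H_{n-2}(N_P^{\mathrm{cyc}})$ and, via the homotopy equivalence, $H_{n-2}(N_P)$.

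The main subtlety I anticipate is justifying that the inclusion $|N_P^{\mathrm{cyc}}| \hookrightarrow |N_P|$ is itself a homotopy equivalence, rather than merely a map between two spaces each equivalent to $M$. Concretely, one picks a partition of unity subordinate to the larger cover, restricts it to the sub-cover, and checks that the two induced barycentric maps to $M$ fit into a commuting triangle with the inclusion of nerves up to homotopy. The remaining ingredients---the identification of the $n$-fold join of $S^0$ with $S^{n-1}$ and the computation of $H_{n-2}$ for $S^{n-1}$ minus two disjoint open $(n-1)$-disks---are standard.
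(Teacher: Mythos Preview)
Your argument is correct, and the overall strategy---reduce to a sub-cover of $M$ by cyclic half-spaces and invoke the nerve lemma---matches the paper's. The difference is in the size of the sub-cover. The paper passes to the \emph{minimal} sub-cover consisting of only the $n$ half-spaces $K^{x}_{i,i+1}$ indexed by the chosen oriented cycle $g$ itself: these already cover $M$ (a point missed by all of them would satisfy the cyclic chain of non-strict inequalities forcing it onto the diagonal), and their nerve is immediately $\partial\delta(g)\cong S^{n-2}$, so $h(g)$ is tautologically the fundamental class. You instead keep all $2n$ cyclic half-spaces, obtain the cross-polytope boundary $S^{n-1}$ with the two antipodal top cells deleted, and then identify $h(g)$ with one boundary sphere of the resulting cylinder.

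Both routes are valid. The paper's is shorter, since it skips the join-of-$S^0$'s description and the ``$S^{n-1}$ minus two disjoint disks'' step. Your version has the mild conceptual bonus of exhibiting both oriented cycles simultaneously as the two ends of a cylinder, which makes their homological relationship visible (they represent $\pm$ the same class), though the lemma does not require this. Your flagged subtlety---that the inclusion of nerves, not just an abstract equivalence to $M$, is a homotopy equivalence---is real and your partition-of-unity sketch is the standard fix; the paper relies on the same functoriality without comment.
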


\begin{proof}
Suppose that $\delta(g)=\{(1,2,\alpha_1),\dots, (n-1,n,\alpha_{n-1}),(1,n,\alpha_n)\}$. Any $n-1$ give rise to an acyclic binary relation on $\{1,2,\dots,n\}$ which can be
extended to a strict order so the corresponding face is in $N_P$. Since 
$h(g)=\partial \delta(g)$, 
it is in $H^{n-1}(N_P)$. If $g$ is acyclic, $\delta(g)\in N_P$ so that $h(g)=0$ in  $H_{n-2}(N_P)$. Next, let $g$ be an oriented cycle, for example $\alpha_k=+$ for all $k<n$ and $\alpha_n=-$. The sets $R_{12}^+,\dots,R_{n-1 n}^{+}, R_{1n}^{-}$ cover $R^n-D$ since any $x$ not in their union has $x_1\leq x_2 \leq \dots \leq x_n\leq x_1$. Then the inclusion of the subcomplex of $N_P$ whose maximal faces are the $(n-1)$-subsets of $\delta(g)$ is a homotopy equivalence and $h(g)$ is a generator for $H_{n-1}(N_P)$. 
\end{proof}

Going forward, fix an oriented cycle $\hat{g}$ and let $c$ be the associated generator from above. The universal coefficient theorem says that $H_{n-2}(N_P)\cong H^{n-2}(N_P)$ and that we can find an element $c^*$ of  $H^{n-2}(N_P)$ where $(c,c^*)=1$ and $c^*$ is a generator for $H^{n-2}(N_P)$.\footnote{$(c,c^*)$ here denotes the pairing between homology and cohomology.}

Now we're ready to give a basis for $H^{n-2}(N_P)$. Let $\Delta_{\mathscr{P}}$ be the simplex with the same vertices as $N_{\mathscr{P}}$. Any collection $(g_1,\dots, g_N)$ of orientations of $g^0$ gives a face in $\Delta_{\mathscr{P}}$ including the $(i,j,\sigma)$ where $\sigma_l = +$ if $i\rightarrow i+1$ in $g_l$ and $\sigma_l = -$ if $i\leftarrow i+1$ in $g_l$. We denote this simplex as $\delta(g_1,\dots, g_N)$ and write its boundary as $h(g_1,\dots, g_N)$.

For each $l$, let $p_l$ be the simplicial map from $N_{\mathscr{P}}$ to $N_P$ sending $(i,j,\sigma)$ to $(i,j,\sigma_l)$. Let $p_l^*$ be the induced homomorphism $H^{n-2}(N_P)\rightarrow H^{n-2}(N_\mathscr{P})$.

For each $l$, fix some $(g^l_1,\dots, g^l_N)$ where $g^k_l$ is acylic if $k\neq l$ and $g^l_l=\hat{g}$ let $h_l=h(g^l_1,\dots, g^l_N)$.

\begin{lemma}
    The collection $\{h_1,\dots, h_N\}$ is a basis for $H_{n-2}(N_\mathscr{P})$ and $\{p_1^*(c^*),\dots, p_N^*(c^*)\}$ is the dual basis for $H^{n-2}(N_\mathscr{P})$.
\end{lemma}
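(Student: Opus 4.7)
The plan is to compute the $N \times N$ Kronecker pairing matrix with entries $(h_k,\, p_l^*(c^*))$, show that it equals the identity, and then deduce by a unimodularity argument that both collections are $\mathbb Z$-bases, automatically dual to one another. Theorem~\ref{thm: profile groups} together with the universal coefficient theorem (the vanishing of $H^k(N_\mathscr{P})$ in dimensions $1\le k\le n-3$ forces $H_{n-3}(N_\mathscr{P})$ to have no free part and no torsion) guarantees that $H_{n-2}(N_\mathscr{P})$ is free abelian of rank $N$ and that the Kronecker pairing between $H_{n-2}(N_\mathscr{P})$ and $H^{n-2}(N_\mathscr{P})$ is perfect. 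Given this, it is enough to produce $N$ elements on each side with pairing equal to the identity: the map $H^{n-2}(N_\mathscr{P})\to \mathbb{Z}^N$ sending $\phi$ to $((h_1,\phi),\dots,(h_N,\phi))$ is then a homomorphism of free groups of rank $N$ carrying $p_l^*(c^*)$ to the standard basis vector $e_l$, hence an isomorphism, and symmetrically on the other side.

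To compute the pairing matrix, I would apply naturality to write $(h_k,\, p_l^*(c^*)) = ((p_l)_*(h_k),\, c^*)$, and then identify $(p_l)_*(h_k)$ inside $H_{n-2}(N_P)$. The simplex $\delta(g_1^k,\dots,g_N^k)$ has $n$ vertices indexed by the edges of $g^\circ$: the edge $(i,j)$ contributes the vertex $(i,j,\sigma)$ whose coordinate $\sigma_m$ records the orientation of $(i,j)$ in $g_m^k$. The simplicial map $p_l$ sends this vertex to $(i,j,\sigma_l)$, which is exactly the vertex of $\delta(g_l^k)$ corresponding to the same edge. Thus $p_l$ identifies $\delta(g_1^k,\dots,g_N^k)$ with $\delta(g_l^k)$ as $(n-1)$-simplices, and since simplicial maps commute with the boundary operator, $(p_l)_*(h_k) = h(g_l^k)$.

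The construction of $(g_1^k,\dots,g_N^k)$ now finishes the computation. When $l=k$, $g_l^k = \hat g$ is our fixed oriented cycle, so by the preceding lemma $h(g_l^k) = c$ and the pairing evaluates to $(c, c^*) = 1$. When $l \neq k$, $g_l^k$ is acyclic, so $h(g_l^k) = 0$ in $H_{n-2}(N_P)$ and the pairing is $0$. Hence $(h_k,\, p_l^*(c^*)) = \delta_{kl}$, and the conclusion follows from the unimodularity argument of the first paragraph.

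The main obstacle is purely bookkeeping: verifying that $p_l$ restricts to a simplicial bijection from $\delta(g_1^k,\dots,g_N^k)$ onto $\delta(g_l^k)$. This reduces to the observation that the edges of $g^\circ$ are distinct pairs $(i,j)$, so the first two coordinates of the vertex labels already distinguish the $n$ vertices after applying $p_l$. Once this identification is in place, the remainder is pure naturality together with the previously established fact $(c,c^*)=1$.
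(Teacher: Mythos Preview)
Your proposal is correct and follows essentially the same approach as the paper: compute the Kronecker pairing matrix $(h_k,p_l^*(c^*))=\delta_{kl}$ and invoke the universal coefficient theorem to conclude both collections are bases. The paper's proof merely asserts the pairing identity and leaves the basis conclusion as ``a simple exercise''; you have spelled out both the naturality step $(h_k,p_l^*(c^*))=((p_l)_*h_k,c^*)=(h(g^k_l),c^*)$ and the unimodularity argument that the paper omits.
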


\begin{proof}
    From the universal coefficient theorem,  $H_{n-2}(N_\mathscr{P})\cong H^{n-2}(N_\mathscr{P})$. For any $k$ and $l$, $(h_k,p_l^*(c^*))$ is $1$ if $k=l$ and is zero otherwise. It is a simple exercise to verify that both must then be bases. 
\end{proof}
This allows us to conclude the following.
\begin{corollary} \label{cor:homologous orientations}
    For any two tuples of orientations $(g_1,\dots,g_N)$ and $(g_1',\dots,g_N')$ where for some $k$, (1) $g_k=g_k'$ (2) both $g_k$ and $g_k'$ are acyclic and (3) $g_l$ and $g_l'$ are acyclic for all $l\neq k$ then $h(g_1,\dots,g_N)=h(g_1',\dots,g_N')$ in $H_{n-2}(N_\mathscr{P})$.
\end{corollary}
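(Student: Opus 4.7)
The plan is to detect the homology class $h(g_1,\dots,g_N)$ in $H_{n-2}(N_\mathscr{P})$ by pairing it against the dual basis $\{p_1^*(c^*),\dots,p_N^*(c^*)\}$ supplied by the previous lemma. Since that lemma identifies $\{p_l^*(c^*)\}$ as a basis dual to $\{h_l\}$, the Kronecker pairing is nondegenerate, and two classes that agree on every $p_l^*(c^*)$ must coincide.

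First I would verify, directly from the definitions, that the simplicial projection $p_l:N_\mathscr{P}\to N_P$ carries $\delta(g_1,\dots,g_N)$ to $\delta(g_l)$ at the chain level: the vertex $(i,i+1,\sigma)$ is sent to $(i,i+1,\sigma_l)$, and by construction $\sigma_l$ records the orientation of the edge $(i,i+1)$ in $g_l$, which is precisely the recipe labeling the vertices of $\delta(g_l)$. Because $p_l$ is simplicial it commutes with the boundary operator, so $p_l\bigl(h(g_1,\dots,g_N)\bigr)=h(g_l)$. Naturality of the Kronecker pairing then gives
\[
\bigl(h(g_1,\dots,g_N),\, p_l^*(c^*)\bigr) \;=\; \bigl(p_{l,*}h(g_1,\dots,g_N),\, c^*\bigr) \;=\; \bigl(h(g_l),\, c^*\bigr),
\]
and by the lemma preceding the corollary this pairing is $0$ whenever $g_l$ is acyclic, and in any case depends only on the single orientation $g_l$.

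To finish, I would observe that under the hypotheses of the corollary, for every index $l$ either $l=k$ (so $g_l=g_l'$) or $l\ne k$ (so both $g_l$ and $g_l'$ are acyclic). In the first case the pairings of $h(g_1,\dots,g_N)$ and $h(g_1',\dots,g_N')$ against $p_l^*(c^*)$ agree trivially; in the second case both are zero. Hence the two classes agree against every element of the dual basis, and nondegeneracy forces $h(g_1,\dots,g_N)=h(g_1',\dots,g_N')$ in $H_{n-2}(N_\mathscr{P})$. The only real care-point is the sign-bookkeeping in the identification $p_l(\delta(g_1,\dots,g_N))=\delta(g_l)$; once that is nailed down, the rest of the argument is a formal consequence of the duality structure already established.
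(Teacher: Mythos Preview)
Your argument is correct and is precisely the derivation the paper intends: the corollary is stated without proof as an immediate consequence of the preceding dual-basis lemma, and your Kronecker-pairing computation against the basis $\{p_l^*(c^*)\}$ is exactly how that implication is cashed out. The only minor remark is that the vanishing $(h(g_l),c^*)=0$ for acyclic $g_l$ comes from the earlier $N_P$ lemma (where $h(g)=0$ when $g$ is acyclic), not the immediately preceding one; and your sign worry is harmless here, since you only need that $p_{l,*}h(g_1,\dots,g_N)=h(g_l)$ as homology classes, which follows formally from $p_l$ being simplicial.
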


\subsection{The Topology of a Social Choice Function}

Let $d^*$ be a generator of $H^{n-2}(N_A)$. 

\begin{proposition}\label{prop: dictators in homology}
    Suppose $f$ is monotonic and unanimous. $(f_{\star}(h_l),d^*)=1$ if and only if $l$ is a dictator of $f$. Otherwise it is zero.
\end{proposition}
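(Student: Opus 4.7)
The plan is to reduce $(f_\star(h_l), d^*)$ to the behavior of the vertex map induced by $f^s$ on the $n$ vertices of $\delta = \delta(g^l_1, \dots, g^l_N)$, namely $v_i = (i, i+1, \sigma^i)$ for $i < n$ and $v_n = (1, n, \sigma^n)$. Since $h_l = \partial \delta$, and $f^s$ extends linearly on vertex sets to the ambient simplex $\Delta_{\mathscr{P}}$, I compute $f_\star(\delta)$ formally as a chain in the full $(n-1)$-simplex $\Delta_A$ on the vertex set $\{1, \dots, n\}$ (whose boundary is $N_A$). Each image $f^s(v_i)$ lies in $\{i, i+1\}$ for $i < n$ (and in $\{1, n\}$ for $v_n$), so $f_\star(\delta)$ equals $\pm[\Delta_A]$ when the $n$ images are all distinct (with sign given by the sign of the resulting permutation of $\{1, \dots, n\}$) and vanishes otherwise. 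Consequently $f_\star(h_l) = \partial f_\star(\delta) \in \{0, \pm [\partial \Delta_A]\}$, and its pairing with $d^*$ belongs to $\{0, \pm 1\}$.

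For the forward direction, if voter $l$ is a dictator then $f$ factors as $\text{top} \circ \pi_l$, where $\pi_l$ projects onto voter $l$'s preference, and consequently $f^s = \text{top}^s \circ p_l$, where $\text{top}^s : N_P \to N_A$ is the simplicial map induced by $\text{top}$, sending $(i,j,+) \mapsto j$ and $(i,j,-) \mapsto i$. Since $(p_l)_\star(h_l) = c$, we have $f_\star(h_l) = (\text{top}^s)_\star(c)$. A direct check shows that $\text{top}^s$ sends the vertices of $\delta(\hat g)$ bijectively onto $\{1, \dots, n\}$ via $(i, i+1, -) \mapsto i$ and $(1, n, +) \mapsto n$, so $(\text{top}^s)_\star(c)$ generates $H_{n-2}(N_A)$; fixing $d^*$ to pair positively with it gives $(f_\star(h_l), d^*) = 1$.

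For the reverse direction, assume $(f_\star(h_l), d^*) = 1$ and choose each $g^l_k$ (for $k \neq l$) to be the acyclic orientation corresponding to the linear order $a_1 \succ a_2 \succ \dots \succ a_n$, so that $\sigma^i_k = +$ for every $i$ and every $k \neq l$. At the vertex $v_n$, since $\sigma^n_l = +$ as well, every voter top-ranks $a_1$ in the associated special profile, and unanimity forces the outcome to be $a_1$ and hence $f^s(v_n) = n$. Combined with the hypothesis that the vertex map is a sign-$+1$ bijection and the structural constraint $f^s(v_i) \in \{i, i+1\}$ for $i < n$, the only possibility is the identity $v_i \mapsto i$ for every $i$. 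Unpacking: for each edge $(a_i, a_{i+1})$ with $i < n$, on the profile where voter $l$ alone prefers $a_{i+1}$ while every other voter prefers $a_i$ (all with top two alternatives $\{a_i, a_{i+1}\}$), the outcome is $a_{i+1}$. Monotonicity then upgrades this to: on every profile where every voter has top two alternatives $\{a_i, a_{i+1}\}$ and voter $l$ prefers $a_{i+1}$, the outcome is $a_{i+1}$, and similarly for the edge $(a_1, a_n)$ with voter $l$ preferring $a_1$.

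The main obstacle is to convert this cycle-decisiveness — voter $l$ determining outcomes along each of the $n$ edges of the Hamiltonian cycle $g^\circ$ in a consistent direction — into full dictatorship. I plan to invoke a standard decisive-set expansion argument, which exploits $n \geq 3$, monotonicity, and unanimity: given any profile $P$ with $f(P) = b \neq a^* = \text{top}(\succ_l)$, one first constructs the profile $P^*$ having top two alternatives $\{b, a^*\}$ for every voter, preserving each voter's $b$-vs-$a^*$ preference; monotonicity yields $f(P^*) = b$; and chaining the cycle-decisiveness along the path in $\hat g$ from $b$ to $a^*$, through intermediate top-two profiles linked by monotonicity moves, forces $f(P^*) = a^*$, a contradiction. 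This establishes $f(P) = \text{top}(\succ_l)$ for every $P$, so voter $l$ is a dictator.
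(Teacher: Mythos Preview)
Your argument diverges from the paper's in the reverse implication (pairing $=1 \Rightarrow$ dictator). The paper does not fix one convenient tuple of acyclic orientations and then run a combinatorial expansion; instead it exploits Corollary~\ref{cor:homologous orientations} topologically. Starting from injectivity of $f^s$ on the vertices of $\delta(g^l_1,\dots,g^l_N)$, it flips a single arrow in some $g^l_k$ with $k\neq l$ (keeping it acyclic), notes that the homology class---hence the pairing---is unchanged, and deduces that the new vertex must land on the same vertex of $N_A$ as the one it replaced (else the map on the new simplex would fail to be injective and the pairing would vanish). Iterating, $f^s(v_i)$ is shown to be independent of the other voters' orientations, and aligning those orientations with $\hat g$ on the $i$-th edge invokes unanimity. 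This is what delivers dictatorship without any decisive-set combinatorics.

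Your alternative route is in principle viable but the final ``chaining'' step has a real gap. What you have actually established is one-directional decisiveness on cycle edges: in any top-two $\{a_i,a_{i+1}\}$ profile where voter $l$ prefers $a_{i+1}$, the outcome is $a_{i+1}$. You then assert that from $f(P^*)=b=a_j$ (top two $\{a_j,a_k\}$, voter $l$ preferring $a_k$) one can ``chain along the path in $\hat g$ from $b$ to $a^*$ through intermediate top-two profiles linked by monotonicity moves'' to force $f(P^*)=a^*$. But the monotonicity move carrying the outcome $a_j$ from $P^*$ to a top-two $\{a_j,a_{j+1}\}$ profile requires $a_j$ to keep or improve its position for every voter; since in $P^*$ voter $l$ already has $a_j$ in the top two (hence above $a_{j+1}$ whenever $k\neq j+1$), voter $l$ must rank $a_j\succ a_{j+1}$ in the target profile---precisely the wrong preference for applying your edge-decisiveness at $(a_j,a_{j+1})$. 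More broadly, nothing you proved constrains voter $l$'s ranking of intermediate cycle vertices in the original profile $P$, so there is no consistent direction in which to walk the cycle. What you would need here is a field-expansion lemma (semi-decisiveness on a single pair, together with monotonicity, unanimity and $n\ge 3$, implies full dictatorship), which is a separate combinatorial argument not supplied in your sketch; the paper's swap argument via Corollary~\ref{cor:homologous orientations} is exactly what replaces it.
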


\begin{proof}
    Let $(1,2,\sigma^1),\dots (n-1,n,\sigma^{n-1}),(1,n,\sigma^n)$ be the vertices of $\delta(g^l_1,\dots, g^l_N)$. $f^s$ maps each of these vertices onto a vertex of $N_A$. If $f^s$ maps any two vertices of  $\delta(g^l_1,\dots, g^l_N)$ to the same vertex of $N_A$ then $(f_{\star}(h_l),d^*)=0$ and $l$ is not a dictator since in this case there is some $a\in A$ which is not chosen even when $l$ top-ranks it. 

    Conversely, suppose that $f^s$ is injective on the vertices of $\delta(g^l_1,\dots, g^l_N)$. Consider some $g'$ derived from $g^l_k$ by swapping one of the arrows without forming a cycle. All but one of the vertices of $\delta(g^l_1,\dots, g',\dots g^l_N)$ are vertices of $\delta(g^l_1,\dots, g^l_N)$. Let $h_l'=h(g^l_1,\dots, g',\dots g^l_N)$. Corollary \ref{cor:homologous orientations} implies $h_l=h_l'$. The new vertex of $\delta(g^l_1,\dots, g',\dots g^l_N)$ must mapped to the same vertex of $N_A$ as the one it replaced in $\delta(g^l_1,\dots, g^l_N)$ since otherwise, $(f_\star(h_l),d^*)\neq 0$ and $(f_\star(h_l),d^*)= 0$. Repeating this process, changing one arrow at a time, for any $i$ we can reach the a tuple of orientations where $i\leftarrow i+1$ for all orientations.
\end{proof}

Finally, consider $h(\hat{g},\dots,\hat{g})$. Unanimity implies that $(f_\star(h(\hat{g},\dots,\hat{g})),d^*)=1$. Together with proposition \ref{prop: dictators in homology}, we see that there must be exactly one dictator, concluding the proof of the Muller-Satterthwaite theorem.

To prove the Gibbard-Satterthwaite theorem, we need the following simple fact, well-known in social choice theory (see for example \citeasnoun{muller1977equivalence}).

\begin{proposition}
    A social choice function is monotonic and unanimous if and only if it is surjective and strategy-proof.
\end{proposition}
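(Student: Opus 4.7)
The proof splits into the two implications, each admitting a short classical argument built from one-agent-at-a-time deviations; the challenge is book-keeping rather than ideas.

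For the forward direction (monotonic and unanimous implies surjective and strategy-proof), surjectivity is immediate from unanimity: for each $a\in A$, the profile in which every agent top-ranks $a$ is mapped to $a$. To establish strategy-proofness, I would argue by contradiction. Suppose some agent $i$ at profile $(\succ_1,\dots,\succ_N)$ can profitably deviate to $\succ'_i$, so that $a=f(\succ_1,\dots,\succ_i,\dots,\succ_N)$ and $b=f(\succ_1,\dots,\succ'_i,\dots,\succ_N)$ satisfy $b\succ_i a$. Construct a single linear order $\succ^*_i$ by moving $b$ to the top of $\succ_i$, keeping all other alternatives in their original relative order. Monotonicity applied from the first profile yields $f(\succ_1,\dots,\succ^*_i,\dots,\succ_N)=a$, because the hypothesis $a\succ_i x \Rightarrow a\succ^*_i x$ holds: for $x\neq b$ the order is unchanged, and $a\succ_i b$ is false. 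Monotonicity applied from the second profile yields $f(\succ_1,\dots,\succ^*_i,\dots,\succ_N)=b$, because $b$ sits at the top of $\succ^*_i$ and so $b\succ'_i x \Rightarrow b\succ^*_i x$ trivially. This contradicts $a\neq b$.

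For the backward direction (surjective and strategy-proof implies monotonic and unanimous), I would first verify monotonicity and then deduce unanimity. It suffices to establish monotonicity one agent at a time: assume $f(\succ_1,\dots,\succ_N)=a$ and let $\succ'_i$ be a linear order with $a\succ_i x\Rightarrow a\succ'_i x$ for all $x$; write $b=f(\succ_1,\dots,\succ'_i,\dots,\succ_N)$. Strategy-proofness at the original profile rules out $b\succ_i a$, so $a\succ_i b$ or $a=b$; strategy-proofness at the deviated profile rules out $a\succ'_i b$, so $b\succ'_i a$ or $a=b$. If $a\neq b$ both are strict, but then the monotonic-transformation hypothesis turns $a\succ_i b$ into $a\succ'_i b$, contradicting $b\succ'_i a$. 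Hence $b=a$, which is monotonicity for a single-agent change; iterating over agents gives full monotonicity.

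Unanimity then falls out of monotonicity together with surjectivity. Given a profile in which every agent top-ranks $a$, surjectivity furnishes some profile $\succ^*$ with $f(\succ^*)=a$, and the unanimous-$a$ profile is obtained from $\succ^*$ by an agent-by-agent monotonic transformation (moving $a$ to the top can only preserve or strengthen the pairwise comparisons $a\succ^*_i x$). Applying the one-agent monotonicity step proved above to each voter in turn gives $f$-value $a$ at the unanimous profile. The only genuine subtlety in the whole argument is choosing the auxiliary preference $\succ^*_i$ in the forward direction so that a single linear order simultaneously witnesses both monotonicity applications; once that construction is in hand the rest is routine.
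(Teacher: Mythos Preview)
Your proof is correct and follows essentially the same route as the paper's own argument. In fact you supply more detail than the paper does: in the forward direction your auxiliary order $\succ_i^*$ and the two monotonicity applications match the paper exactly, while in the backward direction the paper simply asserts the chain $f(\succ_1,\dots,\succ_N)=f(\succ_1',\succ_2,\dots,\succ_N)=\cdots=f(\succ_1',\dots,\succ_N')$ without justification, and your two-sided strategy-proofness argument is precisely what fills that gap.
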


\begin{proof}
    A unanimous social choice function is clearly surjective. A monotonic social choice function is also strategy-proof: if there were some $f(\succ_i,\succ_{-i})\prec_i f(\succ_i',\succ_{-i})$, letting $\succ_i^*$ be the preference derived from $\succ_i$ by pushing $f(\succ_i',\succ_{-i})$ to the top, monotonicity implies $f(\succ_i,\succ_{-i})=f(\succ_i^*,\succ_{-i})= f(\succ_i',\succ_{-i})$, a contradiction.

    For the converse, suppose that $f$ is strategy-proof and surjective. Let $f(\succ_1,\dots, \succ_N)=a$ and for each $i$, $\succ_i'$ is a linear order such that $a\succ_i b$ implies that $a\succ'_i b$ for all $b$. We have that $f(\succ_1,\succ_2,\dots, \succ_N)=f(\succ_1',\succ_2,\dots, \succ_N)=f(\succ_1',\succ_2',\dots, \succ_N)=\cdots = f(\succ_1',\succ_2',\dots, \succ_N')$, so that $f$ is monotonic. If $f$ is strategy-proof and surjective it is also unanimous since for any $a$ there is some $(\succ_1,\dots,\succ_N)$ such that $f(\succ_1,\dots,\succ_N)=a$ and for any profile $(\succ'_1,\dots,\succ'_N)$ where all agents top-rank $a$, we have $f(\succ'_1,\dots,\succ'_N)$ by monotonicity.
\end{proof}

This proves the Gibbard-Satterthwaite Theorem. 

{\footnotesize
\bibliography{topologicalGS}
\bibliographystyle{econometrica}
}

\end{document}